\newtheorem{remark}{Remark}
\newtheorem{problem}{Problem}
\newtheorem{assumption}{Assumption}
\newtheorem{proposition}{Proposition}
\newtheorem{definition}{Definition}
\definecolor{mygray}{gray}{0.8}
\title{\LARGE \bf A Transition System Abstraction Framework for Neural Network Dynamical System Models}
\author{Yejiang Yang, Zihao Mo, Hoang-Dung Tran, and Weiming Xiang 
\thanks{This research was supported by the National Science Foundation, under NSF CAREER Award no. 2143351, NSF CNS Award no. 2223035, and NSF IIS Award nos. 2331937, 2331938.}
\thanks{Yejiang Yang, Zihao Mo, and Weiming Xiang are with the School of Computer and Cyber Sciences, Augusta University, Augusta GA 30912, USA. }
\thanks{Hoang-Dung Tran is with the School of Computing, University of Nebraska-Lincoln, Lincoln, NE 68588, USA.}
\thanks{Yejiang Yang is also with the School of Electrical Engineering, Southwest Jiaotong University, China.}
                }
\begin{document}

\maketitle
\thispagestyle{empty}
\pagestyle{empty}

\begin{abstract}
\boldmath
This paper proposes a transition system abstraction framework for neural network dynamical system models to enhance the model interpretability, with applications to complex dynamical systems such as human behavior learning and verification. To begin with, the localized working zone will be segmented into multiple localized partitions under the data-driven Maximum Entropy (ME) partitioning method. Then, the transition matrix will be obtained based on the set-valued reachability analysis of neural networks. Finally, applications to human handwriting dynamics learning and verification are given to validate our proposed abstraction framework, which demonstrates the advantages of enhancing the interpretability of the black-box model, i.e., our proposed framework is able to abstract a data-driven neural network model into a transition system, making the neural network model interpretable through verifying specifications described in Computational Tree Logic (CTL) languages. 
\end{abstract}

\section{Introduction}

In recent years, the popularity of machine learning models has surged dramatically, primarily owing to their remarkable capacity to offer valuable insights and accurate predictions, such as the remaining useful life prediction of lithium-ion batteries in \cite{WANG2023108920}, controlling physical human-robot interaction in \cite{liu2023fuzzy}, modeling the thermal conductivity of water-based nanofluids containing magnetic copper nanoparticles in \cite{ghazvini2020experimental}. Nonetheless, conventional machine learning models are often regarded as ``black boxes", making the control and verification of neural network models relying on real-time trajectory computation with tools such as those proposed in \cite{zhang2018efficient}, \cite{wang2021beta,tran2020nnv,fazlyab2020safety} and \cite{yang2022guaranteed}, etc., which are often considered computationally expensive and challenging for real-time computation. Additionally, the dilemma of black-box verification limits the applications of neural networks in dynamical systems modeling, thus, the need for an interpretable abstraction for neural network dynamical system that unveils the black box is critical.

Thanks to recent works on hybrid system learning in \cite{poli2021neural}, \cite{sprague2022efficient}, \cite{beg2017model} and research on learning frameworks via distributed learning structures in \cite{chowdhery2022palm}, \cite{jacobs1991adaptive} and \cite{horvath2023stochastic}, etc., it is possible to break down one general system into multiple sections and the relationships between them. However, most of the existing works tend to concentrate on modeling hybrid dynamics using hybrid models, inadvertently overlooking the essential benefits of employing distributed learning models, which can significantly enhance the interpretability \cite{li2022interpretable} of the learning model while also broadening the scope of model verification, e.g., verifying the learning model with user-specified properties via Linear Temporal Logic \cite{liu2013synthesis}. 

Diverging from the approach of employing multiple neural networks to learn hybrid systems, which primarily emphasizes subsystem learning, the challenge of investigating inter-subsystem relationships is rooted in the abstraction of the model. To begin with, since prior knowledge of a system is often limited, especially in the context of data-driven modeling, we introduce a transition system abstraction approach. This method involves partitioning the model's working zone using techniques founded on Maximum Entropy (ME) partitioning, as discussed in \cite{yang2023data}. Following this, the transitions between distinct partitions will be approximated via reachability analysis, which enables us to abstract the neural network dynamical system and the partition relationship into a transition system. Consequently, this method provides an intuitive means of discovering the relationship between subsections of a neural network and contributes to its improved interpretability. 

The main contributions of this paper can be summarized as follows.
\begin{itemize}
    \item The proposed framework involves the introduction of a novel transition system abstraction for neural network dynamical systems. This model serves the dual purpose of verifying the behaviors of localized subsystems and shedding light on their interrelationships by estimating transitions within the neural network dynamical system's working zone through the application of reachability analysis, a technique uniquely suited to analyzing localized subsystem interactions. 
    \item The proposed neural transition system can be established on an existing neural network dynamical system to make the black-box model interpretable. The specifications can be verified through Computational Tree Logic (CTL), thereby broadening the possibilities for neural-network-based model verification. The approach is validated by human handwriting dynamics learning and verification.
\end{itemize}

The paper is structured as follows: Preliminaries and problem formulations are given in Section II. The main result, a data-driven neural transition system abstraction-based framework, is given in Section III. In Section IV, the applications for abstracting the learning model of the LASA handwriting data set with our proposed framework are presented. Conclusions are given in Section V.  

\emph{Notations}: In the rest of the paper, $\mathbb{N}$ denotes the natural number sets, $\mathbb{R}$ is the field of real numbers; $\mathbb{R}^n$ stands for the vector space of $n$-tuples of real number; $\underline{X}$ and $\overline{X}$ are the lower bound and upper bound of an interval $X$, respectively.

\section{Preliminaries and Problem Formulation}
A neural network dynamical system model describes the dynamical system behaviors through an input-output point of view in the form of
\begin{align}\label{eq: neural network system}
    x(k+1)=\Phi(x(k),u(k))
\end{align}
in which $\Phi:\mathbb{R}^{n_x+n_u}\to\mathbb{R}^{n_x}$ is the trained neural network. A neural network dynamical system model is trained based on samples, which can be well-trained under an adequate sample set. Therefore, we assume that the $\Phi$ applies only to a localized working zone $\mathcal{X}$ which can be obtained based on samples.  

Neural network $\Phi$ can approximate dynamical systems with high accuracy, but the lack of interpretability burdens the processes of verification and control synthesis. In this paper, we aim to develop a transition system abstraction for a neural network dynamical system model defined as 

\begin{definition}\label{def_transition system}
A transition system abstraction is a tuple $\mathcal{T} \triangleq \left \langle \mathcal{X},P,\mathcal{E}\  \right \rangle$ consisting of the following components:
\begin{itemize}

\item \emph{Working zone} $\mathcal{X}\subset \mathbb{R}^{n_x} $ is the working zone that users are paying attention to. 

\item \emph{Partitions}: $P=\{P_1,\cdots, P_n\}$ is a finite set of partitions, which are subspaces of $\mathcal{X}$ where: 1) $P_i\subset \mathcal{X}$; 2) $\mathcal{X}=\bigcup{P_i} $; 3) $P_i\bigcap P_j=\emptyset$. 

\item \emph{Transitions}: $\mathcal{E}\subset \mathbb{B}^{n\times n}$ is the transition matrix where $e_{ij}$ is the $i$th line, $j$th row boolean variable of transition matrix that describes whether there exists a transition from $i$th partition to $j$th partition. For example, a transition from $i$th partition to $j$th partition is taking place when $x(k)\in P_i$ and $x(k+1) \in P_j$, $e_{ij}=1\in \mathcal{E}$, if there is no transition, $e_{ij}=0$.
\end{itemize}
\end{definition}


In this section, we will be presenting the preliminaries and problem formulation in this paper, which includes the variation of the Shannon Entropy, feed-forward neural network, and its reachability analysis.

\subsection{Variation of the Shannon Entropy}

Learning models are usually established under limited prior knowledge, which means the system information is provided mostly by samples. The Shannon Entropy measures the uncertainty or disorder in a given data distribution while only relying on samples. Specifically, according to \cite{rajagopalan2006symbolic} the Shannon Entropy of the dynamics with $l\in\mathbb{R}$ partitions (denoted as $H(l)$) is 
\begin{align}\label{ali: Shannon Entropy}
     H(l)=-\sum_{i=1}^l p_i \log_2{p_i},
\end{align}
where $p_i$ denotes the probability of the $i$th partition of the dynamics. If we partition the dynamics into $k\ge l$ local partitions, we can compute the variation of the Shannon Entropy based on (\ref{ali: Shannon Entropy}) in
\begin{align}
    \Delta H= H(k)-H(l)
\end{align}
where $\Delta H$ reviles the variation of the Shannon Entropy, which helps in identifying the difference between different partitioning sets. We denote the probability of one partition in the form of
\begin{align}
    p_i=\frac{N_i}{N},
\end{align}
in which $p_i$ is the occurrence probability of partition $\mathcal{P}_i$ in (\ref{ali: Shannon Entropy}), $N_i$ is the number of samples that are within $\mathcal{P}_i$, and $N$ is the number of all samples. 

\subsection{Neural Network Dynamical System Model and Reachability Analysis}

Our proposed framework focuses on developing an interpretable neural transition model on an existing trained neural network dynamical system model in the form of (\ref{eq: neural network system}). Among all the learning structures of neural networks, feed-forward neural networks have received particular attention for their simple structure and varied training approaches. In this section, We will take a feed-forward neural network as an example to investigate the internal propagation mechanism of a neural network dynamical system model.

For a feed-forward neural network $\Phi:\mathbb{R}^{n_0}\to\mathbb{R}^{n_L}$ with $L$ layers, the process of inter-layer propagation is in the form of
\begin{align}\label{eq:feed-forward neural network}
    x_{k+1}=h_k(W_{k}x_k+b_k)
\end{align}
in which $x_k\in\mathbb{R}^{n_k}$ is the output of $k$th layer, $W_{k}\in\mathbb{R}^{n_{k+1}}\times \mathbb{R}^{n_{k}}$ is the weight matrix that connects $k$th layer to $k+1$th layer, $b_k\in\mathbb{R}^{n_k}$, $h_k:\mathbb{R}^{n_k} \to \mathbb{R}^{n_k}$ is the activation function of $k$th layer.

Based on (\ref{eq:feed-forward neural network}), namely the weight and bias are given, we will be able to compute the reachable set output defined as follows.
\begin{definition}\label{def: reachable set analysis}
    For a neural network dynamical system model $\Phi:\mathbb{R}^{m}\to\mathbb{R}^n$, when given a set input $\mathcal{X}\subset\mathbb{R}^m$, we define 
    \begin{align}
        \mathcal{Y}=\{y\in\mathbb{R}^n\mid y = \Phi(x),\forall x\in \mathcal{X}\},
    \end{align}
as the reachable set output of the neural network $\Phi$ when given input set $\mathcal{X}$.
\end{definition}   

The reachable set defined in Definition \ref{def: reachable set analysis} facilitates verification through the input-output set, which presents a unique angle to tackle the challenge of black-box verification.

\begin{definition} \label{def_reachable set estimation}
Given initial state set $\mathcal{X}_{(0)}$ and input set $\mathcal{U}$, the reachable set at $k$th time step $\mathcal{X}_{(k)}$ of a neural network dynamical system model $\Phi$ is defined as 
\begin{align}\label{equ: reachable set analysis}
   \mathcal{X}_{(k)} =\left\{x(k;x_0,u(\cdot))\in\mathbb{R}^{n} \mid x_0 \in \mathcal{X}_{(0)},~ u(k) \in \mathcal{U}\right\},
\end{align}
and the union of $\mathcal{X}_{(k)}$ over $[0,K]$ defined by
\begin{equation}\label{eq: entropy variation}
\mathcal{R}_{(K)}=\bigcup_{k=0}^{K}\nolimits\mathcal{X}_{(k)}, 
\end{equation}
is the reachable set over time interval $[0,K]$. 
\end{definition}

\begin{remark}
Instead of relying on the input-output point of view, reachability analysis makes the set-valued computation possible so that we can handle the perturbations and inaccuracies in safety-critical scenarios. However, the reachability analysis heavily relies on iterations of the reachable set computation which can be very computationally expensive and time-consuming. 
\end{remark}

\subsection{Problem Formulation}
Neural networks are often considered black-box models, primarily due to their limited interpretability. Consequently, this presents challenges for real-time verification, as available methods are constrained and computational costs are high. 
\begin{problem}\label{problem_1}
Based on a neural network dynamical system $\Phi$, how can we abstract one general $\Phi$ into a transition system $\mathcal{T}$ that makes it possible to avoid computational-expensive real-time verification while enhancing the interpretability of a neural network dynamical system? 
\end{problem}

In the rest of the paper, we will focus on solving Problem \ref{problem_1} in detail.

\section{Transition System Abstraction via ME Partitioning and Reachability Analysis}


We aim to solve Problem \ref{problem_1} with our proposed transition system abstraction while computing its transitions with set-valued reachability analysis to generate the abstracted transition graph to reveal its inner transitions.

To begin with, the neural network dynamical system will be abstracted within the working zone. The assumption of the working zone is given as follows.

\begin{assumption}
The working zone of neural network dynamical system $\Phi$ is within the localized state space $x\in\mathcal{X}$, given the external input bound where $u\in[\underline{u},\overline{u}]$. 
\end{assumption}

\begin{remark}
A neural network dynamical system is trained based on samples, which can be well-trained under an adequate sample set. Therefore, we assume that the learning model applies only to a localized working zone based on samples. This localized approach allows for better abstraction of the dynamics within that specific region. 
\end{remark}




\subsection{ME Partitioning}
Based on the working zone of the neural dynamical system, we are able to generate numerous traces that are included in the set of samples when given random initial starting points with (\ref{eq: neural network system}).

\begin{definition}\label{def_sampled data}
Set of samples $\mathcal{W}=\{w_1,w_2,\cdots,w_L\}$ of neural network dynamical system (\ref{eq: neural network system}) is a collection of sampled $L$ traces obtained by measurement, where for each trace $w_i$, $i = 1,\ldots,L$, is a finite sequence of time steps and data $(k_{0,i},d_{0,i}),(k_{1,i},d_{1,i}),\cdots,(k_{M_i,i},d_{M_i,i})$ in which
\begin{itemize}
    \item $k_{\ell,i}\in(0,\infty)$ and $k_{\ell+1,i}=k_{\ell,i}+1$, $\forall \ell = 0,1,\ldots,M_i,\ \forall i = 1,2,\ldots,L$.
    \item $d_{\ell,i}=[x^\top_{i}(k_{\ell,i}),~u^\top_{i}(k_{\ell,i})]^\top \in \mathbb{R}^{n_x+n_u}$,
    $\forall \ell = 0,1,\ldots,M_i$, $\forall i = 1,2,\ldots,L$, where $x_{i}(k_{\ell,i}),u_{i}(k_{\ell,i})$ denote the state and input of the system at $\ell$th step for $i$th trace, respectively.
\end{itemize}
\end{definition}

In this section, we will be presenting the ME partitioning technique that divides one general working zone $\mathcal{X}$ into a set of partitions $\{P_1,\ldots, P_n\}$ based on the set of samples, while computing the transitions between partitions with reachable set computation.

First, ME partitioning will partition the working zone into multiple partitions based on the Shannon Entropy of the abstraction. Based on (\ref{eq: entropy variation}), it takes the samples within one partition for Shannon Entropy computation. We define the segment data set, which is subsequently defined under this process, as follows.

\begin{definition}\label{def_segment set}
Given a set of samples $\mathcal{W}$ of neural network dynamical system $\Phi$ and a set of partitions $P=\{P_1,\ldots,P_N\}$, the segment data set is 
    \begin{align}
        X=\{X_1,X_{2},\ldots,X_{N}\}
    \end{align}
in which
\begin{align}\label{equ:segment sample}
     X_q \triangleq \{x_{ i}(k_{ \ell, i}) \mid x_{ i}(k_{ \ell, i})\in P_q\}
\end{align}
in which $\forall i=1,2,\ldots,L$, $\forall \ell=0,1,\ldots,M_i$, $x_i(k_{\ell,i})$ is any sampled state $\forall q= 1,\ldots,N$ in traces $w_i$. Therefore, $X_q$ contains all the sampled state traces evolving within partition $P_q$. 
\end{definition}

\begin{remark}
In this paper, the working zone $\mathcal{X}$ is defined by hyper-rectangles which are given as follows: For any bounded state set $\mathcal{X} \subseteq \mathbb{R}^{n_x}$, we define $\mathcal{X} \subseteq \bar{\mathcal{X}}$, where $\bar{\mathcal{X}} = \{x \in \mathbb{R}^{n_x} \mid \underline{x} \le x \le \bar{x}\}$, in which $\underline{x}$ and $\bar{x}$ are defined as the lower and upper bounds of  $x$ in $\mathcal{X}$ as  $\underline{x}=[\mathrm{inf}_{x\in \mathcal{X}}(x_1),\ldots,\mathrm{inf}_{x\in \mathcal{X}}(x_{n_x})]^{\top}$ and $\bar{x}=[\mathrm{sup}_{x\in \mathcal{X}}(x_1),\ldots,\mathrm{sup}_{x\in \mathcal{X}}(x_{n_x})]^{\top}$, respectively. We denote the distance of $\mathcal{X}$ at $k$th dimension as $d(\mathcal{X}_k)\triangleq (\sup_{x\in\mathcal{X}}(x_k)-\mathrm{inf}_{x\in\mathcal{X}}(x_k))$.
\end{remark}

The goal of the Maximum Entropy Partitioning Algorithm is to determine the optimal set of partitions. Specifically, we bisect one partition into two once at a time, e.g., $P_i\to \{P_{i,1},P_{i,2}\}$ will lead to the variation of the Shannon Entropy
\begin{align}\label{ali_shannon bisecting}
     \Delta H_i=\frac{N_{i,1}\log_2{\frac{N_{i,1}+N_{i,2}}{N_{i,1}}}+N_{i,2}\log_2{\frac{N_{i,1}+N_{i,2}}{N_{i,2}}}}{N_i}
\end{align}
where $N_{i,1}=\left|X_{i,1}\right|$ is the sample number of $P_{i,1}$ under (\ref{equ:segment sample}), etc.

Based on (\ref{ali_shannon bisecting}), the Shannon entropy will always increase due to $\Delta H_i>0$ in (\ref{ali_shannon bisecting}) always holds. To provide a proper partition set, we will set a threshold $entropy>0$ for $\Delta H_i$ as the stop condition, namely the bisection process will stop if $\Delta H< entropy$. The detailed ME partitioning is given in pseudo-code in Algorithm \ref{alg1} in which we set a minimum threshold $L_{min}$ for the length of the partitions.

\begin{algorithm}[t!] 
\SetAlgoLined
\SetKwInOut{Input}{Input}
\SetKwInOut{Output}{Output}
\Input{Set of samples $\mathcal{W}$; Working Zone $\mathcal{X}$; Threshold for Variation of The Shannon Entropy $entropy$; Threshold for minimum length of the partitions $L_{min}$.}
\Output{Set of Partitions $\{P_1,P_2,\ldots,P_M\}$.}
$P_{save} \gets \emptyset$;~ $\mathcal{W}_{save} \gets \emptyset$;\\
$P_1 \gets \mathcal{X} $;~$r \gets \dim{\mathcal{X}}$; $Length \gets \max(d(\mathcal{X}))$;\\
Obtain $X_1$ from $\mathcal{W}$;~$q \gets |X_1|$; \\

\While{$ Length \ge L_{min}$}{
    $[i,Length,k] \gets$  $\max(d(P))$\\
    Obtain $P_{temp1}$ and $P_{temp2}$ by bisecting $P_i$ on dimension $k$\\
    Obtain $X_{temp1}$ and $X_{temp2}$ based on $P_{temp1}$ and $P_{temp2}$ \\
    \eIf{the variation of the Shannon Entropy $\Delta H_i\ge entropy$}{
   $P_i \gets \{P_{temp1},P_{temp2}\}$\\
 $X_{i}\gets\{X_{temp1},X_{temp2}\}$}
    {
    Add $P_{i}$ to $P_{save}$ and delete $P_i$\\
    Add $\mathcal{W}_{i}$ to $\mathcal{W}_{save}$ and delete $\mathcal{W}_i$
    }
}
$P\gets\{P,P_{save}\}$\\
\Return{$\{P_1,P_2,\ldots,P_M\}$.}
\caption{Pseudo Code for ME Partitioning}\label{alg1} 
\end{algorithm}

\subsection{Transition Computation via Reachability Analysis}
After ME partitioning, we will be able to compute the transition relationship between partitions through the reachability analysis. According to (\ref{equ: reachable set analysis}), we apply the reachable set computation technique in the form of
 \begin{align}\label{eq:reachable set estimation}
     \mathcal{Y}=[\Phi](\mathcal{X})
 \end{align}
in which $[\Phi]$ is a reachable set computation tool such as the Neural Network Verification (NNV) toolbox in \cite{tran2020nnv}, the set-valued reachability analysis in \cite{xiang2018output}, Beta-crown in \cite{wang2021beta}, etc. 

When given the input of one partition $P_i$, we will be able to detect whether the reachable set output of the neural dynamical system intersects with any partitions (included itself), namely, given input set $\mathcal{U}$ if $[\Phi](P_i,\mathcal{U})\cap P_j \neq \emptyset$, then $e_{ij}=1$, based on which the transitions can be obtained. The process of transition computation is given in Algorithm \ref{alg2}.

\begin{algorithm}[t!] 
\SetAlgoLined
\SetKwInOut{Input}{Input}
\SetKwInOut{Output}{Output}
\Input{Neural Network Dynamical System $\Phi(\cdot)$; Set of Partitions $P$.}
\Output{Set of Transitions $\mathcal{E}$.}
Get the number of partitions $n$ from $P$\\
$i\gets1$; $j\gets1$;\\
\While{$ i \le n$}{
   Computing $[\Phi](P_i)$ using (\ref{eq:reachable set estimation})\\
    \While{$j\le N$}{\eIf{$[\Phi](P_i)\cap P_j\neq\emptyset$}{$e_{ij}=1$}
    {$e_{ij}=0$}{$j\gets j+1$}
    }
    $i\gets i+1 $
    }
\Return{$\mathcal{E}=\begin{bmatrix}
    e_{11}  &\cdots &e_{1n}\\
    \vdots  &\ddots &\vdots\\
    e_{n1}  &\cdots &e_{nn}
\end{bmatrix}$.}
\caption{Pseudo Code for Transition Computation}\label{alg2} 
\end{algorithm}

According to the transition system in Definition \ref{def_transition system}, $e_{ii}=1$,  if $[\Phi](P_i,\mathcal{U})\cap P_i\neq \emptyset$, this will result in the self-loop of transition system abstraction. From the verification point of view, $e_{ii}=1$ indicates two scenarios:
\begin{enumerate}
    \item $P_i$ contains invariant set for neural network dynamical system $\Phi$ given the external set $\mathcal{U}$. 
    \item $P_i$ does not contain an invariant set, $e_{ii}=1$ only because sampling time being too short, i.e., $P_i$ does not contain an invariant set when $\exists x(k_{l,i})\in P_i, \ x(k_{l+1,i})\in P_i$.
\end{enumerate}

Scenario 2 may lead to potential misjudgments when we apply the verification technique such as CTL in verification \cite{lahijanian2015formal}. Therefore, we should take $e_{ii}=0$ under certain conditions. In practice, we reduce the self-loop under Scenario 2 with the help of the traces that are segmented by partitions.
\begin{definition}\label{def_segment trace}
    Given trace $w_i$, partition $P_q$, the segment of $w_i$ is 
    \begin{align}
        X_{i,q}=\{x_i(k_{ \ell,i})\mid x_i(k_{ \ell,i})\in P_q\},\ \forall \ell=0,1, \ldots,M_i
    \end{align}
    in which $P_q$ and $w_i$ are the given partition and trace.
\end{definition}

Under Definition \ref{def_segment trace}, we can reduce the self-loop on Scenario 2 with the following assumption.

\begin{assumption}\label{ass_invariant set}
Given threshold $n\in\mathbb{N}$, if there is no invariant set within $P_q$ when given external input set $\mathcal{U}$, then $ |X_{ i,q}|\le n, \forall i=1,\ldots,L$.    
\end{assumption}

Assumption \ref{ass_invariant set} can be very helpful in simplifying the self-loop under Scenario 2 when given adequate neural network dynamical system information.

Neural network dynamical systems $\Phi$ are often trained as the approximation of the system's ideal dynamical system description $f$ in the form of
\begin{align}
x(k+1)=f(x(k),u(k)).    
\end{align}

To make sure we have an accurate abstraction for the dynamical system, we can provide guaranteed transition computation, which means our transition abstraction can still be applied for $f$ with the following proposition.
\begin{proposition}\label{proposition_1}
Given an output reachable set estimation in $\mathcal{X}'=[\Phi](P_i)$ and $\mathcal{X}'\subset P_j$, $\dim (\mathcal{X}')=m$, the estimation for transitions of the neural transition system will be the same as the dynamics $f$, there is a transition in neural transition system as is in the ideal dynamical system description $e_{ij=1}$, if:
(1) $\underline{x}'_{k}-\epsilon\ge\underline{p}_{j,k}$, (2) $\overline{x}'_{k}+\epsilon\le\overline{p}_{j,k}$, $\forall k\le m$, where $\epsilon\ge0$ is the guaranteed distance between dynamics $f$ and its neural network approximation $\Phi$ in \cite{yang2022guaranteed}, $\underline{x}'_k$ is the lower bound of $\mathcal{X}'$ at $k$th dimension, etc.
\end{proposition}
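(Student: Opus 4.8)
The plan is to show that the guaranteed error bound $\epsilon$ between $f$ and $\Phi$ from \cite{yang2022guaranteed} lets us ``inflate'' the reachable set of $\Phi$ by $\epsilon$ in every dimension and still remain inside $P_j$, which is exactly what conditions (1) and (2) encode. First I would recall the meaning of the guaranteed distance: for every $x \in P_i$ and admissible input $u \in \mathcal{U}$, we have $\|f(x,u) - \Phi(x,u)\|_\infty \le \epsilon$, equivalently $f(x,u) \in \Phi(x,u) + [-\epsilon,\epsilon]^m$ componentwise. Consequently the true reachable set $\mathcal{X}'_f \triangleq \{f(x,u) : x \in P_i,\ u \in \mathcal{U}\}$ satisfies the inclusion $\mathcal{X}'_f \subseteq \mathcal{X}' \oplus [-\epsilon,\epsilon]^m$, where $\mathcal{X}' = [\Phi](P_i)$ and $\oplus$ is the Minkowski sum. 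Taking coordinatewise bounds, any point $y \in \mathcal{X}'_f$ obeys $\underline{x}'_k - \epsilon \le y_k \le \overline{x}'_k + \epsilon$ for all $k \le m$.

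Next I would invoke the hyper-rectangle description of the partitions from the Remark: $P_j = \{x \mid \underline{p}_{j,k} \le x_k \le \overline{p}_{j,k},\ k \le m\}$. Combining the bound on $y$ with hypotheses (1) $\underline{x}'_k - \epsilon \ge \underline{p}_{j,k}$ and (2) $\overline{x}'_k + \epsilon \le \overline{p}_{j,k}$, we get $\underline{p}_{j,k} \le \underline{x}'_k - \epsilon \le y_k \le \overline{x}'_k + \epsilon \le \overline{p}_{j,k}$ for every coordinate, hence $y \in P_j$. Since $y$ was arbitrary, $\mathcal{X}'_f \subseteq P_j$, so in particular $\mathcal{X}'_f \cap P_j \neq \emptyset$ (the reachable set is nonempty as $P_i$ is nonempty). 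By the transition-computation rule in Algorithm~\ref{alg2}, this forces $e_{ij} = 1$ for the ideal dynamics $f$ as well. Conversely, because $\mathcal{X}' \subseteq \mathcal{X}'_f \oplus [-\epsilon,\epsilon]^m$ is not needed here --- the statement only claims the $\Phi$-transition is preserved under $f$ --- so the one-directional inclusion suffices, and the abstracted transition matrix computed from $\Phi$ is a sound over-approximation-consistent reading for $f$ on this pair $(i,j)$.

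The main obstacle, and the step that needs the most care, is pinning down precisely which norm and which domain the guaranteed distance $\epsilon$ is stated over in \cite{yang2022guaranteed}: the clean coordinatewise argument above uses the $\ell_\infty$ bound, so if the cited result gives an $\ell_2$ (or other) bound one must either convert it, at the cost of a dimension-dependent factor $\sqrt{m}$, or restate the proposition's hypotheses with the corresponding inflation. A secondary subtlety is that $\epsilon$ must be valid uniformly over all $x \in P_i$ and all $u \in \mathcal{U} = [\underline{u},\overline{u}]$ (not just at sampled points); this is where Assumption~1 on the localized working zone and the global nature of the guaranteed-distance certificate are used, and I would state explicitly that $P_i \subseteq \mathcal{X}$ so the certificate applies. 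Everything else is the routine chain of inequalities sketched above.
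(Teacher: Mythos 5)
Your proof is correct and follows essentially the same route as the paper's: use the guaranteed distance $\epsilon$ to inflate the $\Phi$-reachable set $\mathcal{X}'$ coordinatewise, then combine with hypotheses (1)--(2) and the hyper-rectangle form of $P_j$ to conclude $f$'s reachable image of $P_i$ lies in $P_j$, hence $e_{ij}=1$ for $f$ as well. If anything, your write-up is logically cleaner than the paper's own proof, which derives conditions (1)--(2) from $\mathcal{X}'\subset P_j$ rather than using them as hypotheses, and your caveat about which norm the guaranteed distance is stated in (the coordinatewise argument needs $\ell_\infty$ or a $\sqrt{m}$ conversion) is a genuine point the paper glosses over.
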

\begin{proof}
    To begin with, the guaranteed distance (Definition in \cite{yang2022guaranteed}) is
    \begin{align*}
        \left\|\Phi(x)-f(x)\right\|\le \epsilon,
    \end{align*}
    as for the neural transition systems' successive $x'\in P_j$, it yields $\left\|x'-f(x)\right\|\le \epsilon$. It leads to
    \begin{align}\label{ali_proof1}
        \begin{cases}
            \overline{x}'_k\ge \overline{f}_k(x)-\epsilon\\
            \underline{x}'_k \le \underline{f}_k(x)+\epsilon
        \end{cases}
        \forall k\le m,
    \end{align}
    in which $\overline{f}_k(x)$ is the $k$th dimension upper bound of the output set for $f$, etc. If the successive set $\mathcal{X}'\subset P_j$, then $\Phi(x)\in P_j,\  \forall x\in P_i$,
    which leads to
    \begin{align}\label{ali_proof2}
        \begin{cases}
            \underline{x}'_k-\epsilon\ge\underline{p}_{j,k}\\
            \overline{x}'_k+\epsilon\le\overline{p}_{j,k}
        \end{cases}
        \forall k\le m.
    \end{align}
    The proof is complete by combing (\ref{ali_proof1}) with (\ref{ali_proof2}), which shows $f(x)\in P_j,\ \forall x\in P_i$, indicating there is a transition in $\mathcal{T}$ and the ideal dynamical system description $f$. 
\end{proof}

Transition system abstraction for a neural network dynamical system can be summarized as follows.
\begin{itemize}
    \item The localized working zone, i.e., partitions can be obtained based on an ME partitioning process, which is completely data-driven and can be easily tuned by adjusting the threshold.  
    \item The transitions can be off-line computed by reachability analysis. To provide accurate transition estimations between the $\Phi$ and $f$, the guaranteed distance is introduced in the proposed Proposition \ref{proposition_1} for our abstraction system.
\end{itemize}

After ME partitioning and transition computation, we will be able to abstract the neural network dynamical system $\Phi$ into a transition system abstraction $\mathcal{T}=\left \langle\mathcal{X},P,\mathcal{E}\right\rangle$.

\section{Application to Learning and Verification on Human Handwriting Dynamics}
Learning-based methods have been promoted as an effective way to model motions \cite{reinhart2017hybrid,reinhart2011neural}. Consider the complex dynamical system such as human behaviors borrowed from the LASA data set \cite{LASA} which contains handwriting motion demonstrations of human users of $30$ shapes. The challenges of modeling this kind of complex dynamical system can be summarized as follows.
\begin{itemize}
    \item The lack of understanding of human complex behavior. Limited samples may lead to a deep neural network dynamical system model not getting well-trained.
    \item Human demonstrations contain abrupt changes, which means the behaviors of the trained neural network dynamical system model will have differences in local areas of the working zone.
    \item We have limited methods to verify the neural network dynamical system model for its lack of interpretability. 
\end{itemize}


Due to the specifics above, we will demonstrate our proposed abstraction-based framework based on the neural network dynamical system approximation of the Human demonstrations. This process can be summarized as follows\footnote{The code for the human handwriting application and transition system abstraction tool is available at: \url{https://github.com/aicpslab/Abstraction-Toolbox-for-Neural-Transition-Systems}}.
\begin{itemize}
    \item We train $20$-ReLU activated neurons ELMs as the neural dynamical system model for learning Angle and P shape handwriting demonstrations. By setting the maximum and minimum values of the sample data, we obtain the working zone of neural network dynamical systems.
    \item A threshold $entropy=4\times e^{-2}$ is set as the variation of maximum entropy in Algorithm \ref{alg1}, which results in $28$ partitions in transition system abstraction for Angle shape Fig. \ref{Fig_sample&partitions} (a) and $27$ partitions in learning model for P shape Fig. \ref{Fig_sample&partitions} (b). We partition the working zone based on the randomly generated $20$ trajectories.
    \item Based on the partitions of the neural transition systems, we employ the neural verification toolbox for transition computations in Algorithm \ref{alg2}, and reduce the self-loop with $n=100$ in Assumption \ref{ass_invariant set}. Then we will be able to provide the transition graphs in \ref{Fig_partitions} (a) and (b). From Fig. \ref{Fig_sample&partitions} to Fig. \ref{Fig_partitions}, the transition relationships between local working zones are able to be interpreted through our transition system abstraction.
    \item We verify our transition system abstraction with the Computation Tree Logic (CTL) formulae \cite{pan2016model}, in which $\diamond$ or $\Box$ denote the for $some$ or $all$ traces, $\bigcirc$ denotes for next step. The results are given in Tab \ref{tab5}, in which $\phi_1$ is true means there exists a trace from $P_{28}$ to $P_{12}$, $\phi_2$ is true for $\mathcal{T}_{Angle}$ means for all traces that start from $P_{28}$, it will go to $P_{25}$ as the next partitions, $\phi_3$ means for $\mathcal{T}_{Pshape}$ there exists a trace in $P_4$ then will be in $P_{16}$ next.
\end{itemize}

\begin{figure}[htbp]
\centering
	\subfigure[$|P_{Angle}|=28$.]{\includegraphics[width=4.2cm]{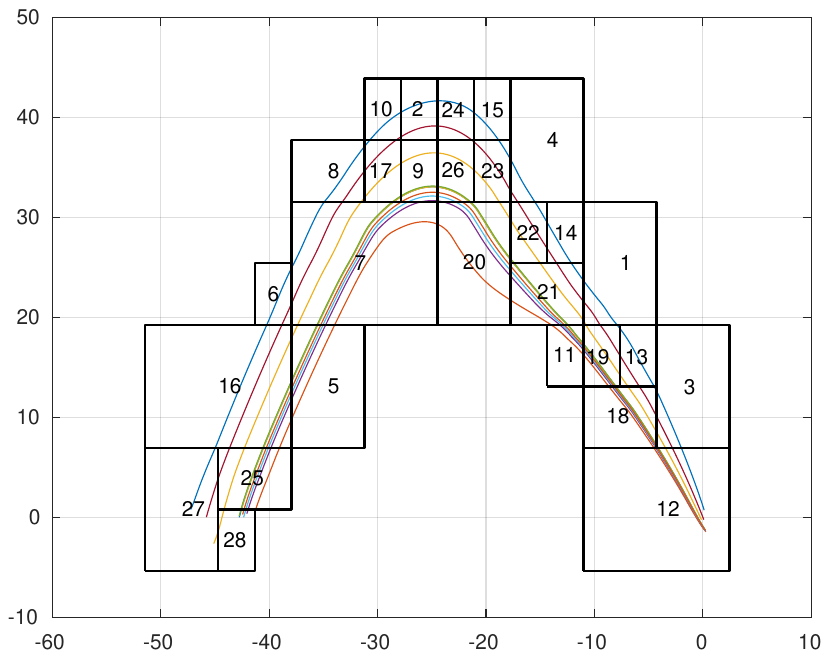}}
	\subfigure[$|P_{P shape}|=27$.]{\includegraphics[width=4.2cm]{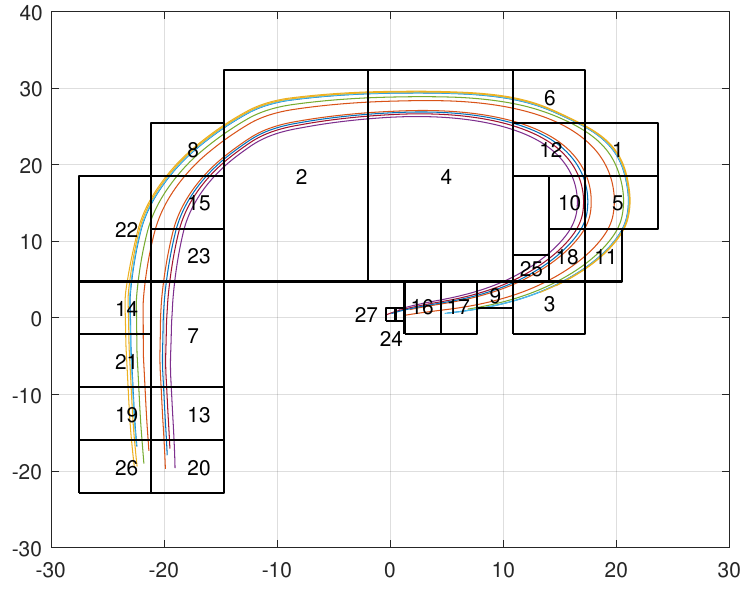}}
	\caption{There are $28$ partitions from (a) while $27$ partitions from (b), information-rich areas are allocated more partitions.}
	\label{Fig_sample&partitions} 
\end{figure}

 \begin{figure}[htbp]
\centering
	\subfigure[Transition system abstraction $\mathcal{T}_{Angle}$.]{\includegraphics[width=6cm]{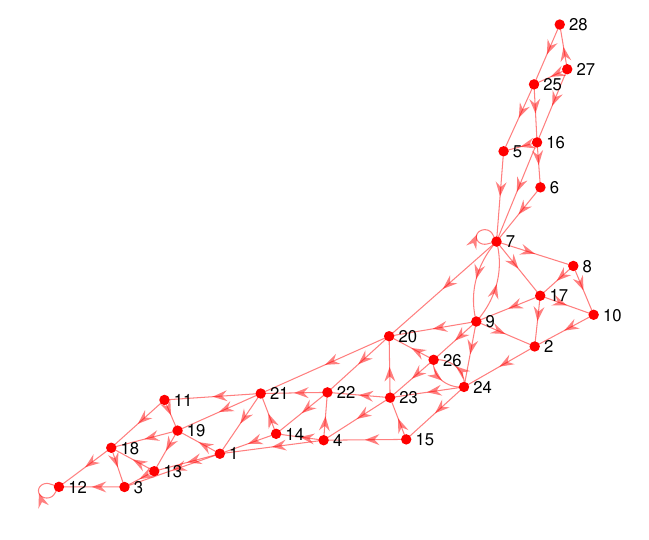}}
	\subfigure[Transition system abstraction $\mathcal{T}_{Pshape}$.]{\includegraphics[width=6cm]{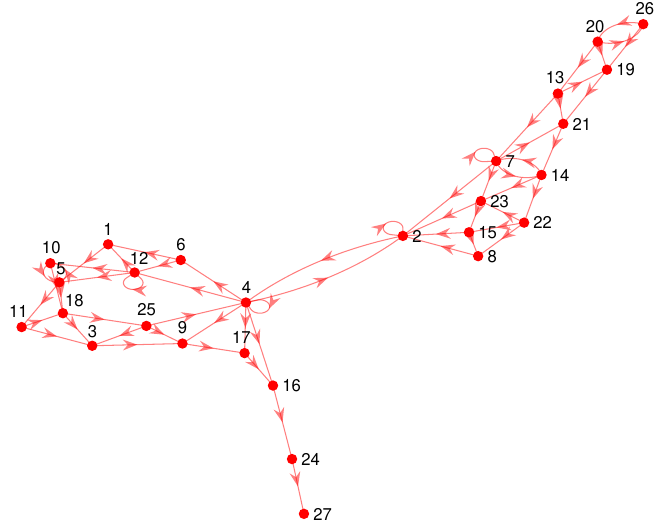}}
	\caption{The transition relationships are denoted by red arrows. Based on (a) and (b) we can clearly see the model's working flow, which means the learning model's interpretability is enhanced, such as results in Table \ref{tab5} where complex CTL formulae can be verified.}
	\label{Fig_partitions} 
\end{figure}

\begin{table}[t!]
	\centering
	\caption{Verification results of $CTL$ formula:  $\mathcal{T}_{Angle}$ with $P_{27}$ as the initial partition, and $\mathcal{T}_{P shape}$ with $P_{26}$ as the initial partition. }\label{pro_1_tab}
	\label{tab5}
	\begin{tabular}{|ccc|}
		\hline 
        \textbf{$CTL$ formula}  &\textbf{$\mathcal{T}_{Angle}$} &\textbf{$\mathcal{T}_{Pshape}$} \\
		\hline\hline
$\phi_1=\diamond\exists P_{12}$   &$true$ &$-$ \\
$\phi_2=\Box \bigcirc P_{25}$   &$true$  &$-$\\
$\phi_3=\diamond\exists ( P_4 U P_{16})$   &$-$ & $true$ \\
\hline
	\end{tabular}
\end{table} 
 
\section{Conclusions}
In this paper, a transition system abstraction-based framework for the neural network dynamical system model is given to promote interpretability. First, the working zone of the transition system will be obtained based on the samples. Then, we employ the ME partitioning to bisect the working zone of the neural network dynamical system model into multiple partitions, while the transitions of the learning model will be computed under reachable set computations. Transition system-based abstractions and verification for neural dynamical systems of Angle shape and P-shape are presented to illustrate the effectiveness, which indicates the interpretability is enhanced while promoting a new way of neural network dynamical system model verification.

\bibliographystyle{ieeetr}
\bibliography{ref}

\end{document}